\newtheorem{prop}{Proposition}
\newtheorem{lemma}{Lemma}
\def\@@bfil{\leaders \vrule \@height \ht\z@ \@depth \z@ \hfill}% default brace filler
\def\@bLfil{\@@bfil}% left leader filler
\def\@bRfil{\@@bfil}% right leader filler
\def\resetbraceratio{\gdef\@bLfil{\@@bfil}\gdef\@bRfil{\@@bfil}}% reset to default braces
\def\setbraceratio#1#2{% \setbraceratio{<left>}{<right>}
  \let\@bLfil\relax% clear left filler
  \multido{\iA=1+1}{#1}{\gappto\@bLfil{\@@bfil}}% increase left ratio
  \let\@bRfil\relax% clear right filler
  \multido{\iA=1+1}{#2}{\gappto\@bRfil{\@@bfil}}% increase right ratio
}
\def\upbracefill{$\m@th\setbox\z@\hbox{$\braceld$}\bracelu\@bLfil\bracerd\braceld\@bRfil\braceru$}
\def\downbracefill{$\m@th\setbox\z@\hbox{$\braceld$}\braceld\@bLfil\braceru\bracelu\@bRfil\bracerd$}
\def\be{\begin{equation}}
\def\ee{\end{equation}}
\def\barr{\begin{array}}
\def\earr{\end{array}}
\def\1{\tilde{1}}
\def\2{\tilde{2}}
\def\3{\tilde{3}}
\newcommand{\ba}{\begin{equation}\begin{aligned}}
\newcommand{\ea}{\end{aligned}\end{equation}}
\newcommand{\bml}{\begin{multline}}
\newcommand{\eml}{\end{multline}}
\newcommand{\CC}{\mathbb{C}}
\newcommand{\RR}{\mathbb{R}}
\newcommand{\ZZ}{\mathbb{Z}}
\newcommand{\PP}{\mathbb{P}}
\newcommand{\dd}{\mathrm{d}}
\newcommand{\pd}{\partial}
\newcommand{\Res}{\mathrm{Res}}
\newcommand{\MM}{\mathscr{M}}
\newcommand{\aA}{\mathscr{A}}
\begin{document}

\title{\boldmath 
Special  geometry on  Calabi--Yau moduli spaces 
and    $Q$--invariant Milnor rings \let\thefootnote\relax\footnote{\textdagger Contribution to
	Proceedings of  International Congress of Mathematicians 2018,
	Rio de Janeiro,(2018)}} 
%\author{Konstantin Aleshkin $^{1,2}$,}
\author{Alexander Belavin}

\affiliation{ L.D. Landau Institute for Theoretical Physics\\
  Semenov av. 1-A\\ Chernogolovka, 142432  Moscow region, Russia}

\affiliation{$^2$ Department of Quantum Physics\\ 
Institute for Information Transmission Problems\\
 Bolshoy Karetny per. 19, 127994 Moscow, Russia}
%\affiliation{$^3$ Moscow Institute of Physics and Technology\\
%Dolgoprudnyi, 141700 Moscow region, Russia}

%\author[1,3,4]{A. Belavin}
%\affil[1]{L.D. Landau Institute for Theoretical Physics\\
% Akademika Semenova av. 1-A\\ Chernogolovka, 142432  Moscow region, Russia}

%\setcounter{Maxaffil}{0}
%\renewcommand\Affilfont{\itshape\small}

%\maketitle

\emailAdd{sashabelavin@gmail.com}

\abstract{The moduli spaces of  Calabi--Yau (CY) manifolds are the special
  K\"ahler manifolds.
The special  K\"ahler geometry determines the low-energy effective theory which arises in 
Superstring theory after the compactification on a CY manifold. For the cases, where the CY manifold is given as a hypersurface in the weighted projective space,  a new procedure  for  computing the K\"ahler potential
 of the moduli space has been proposed in \cite {AKBA1,AKBA2, AKBA3}. The method is based on  the fact  that the moduli space of CY manifolds is a marginal subspace of the Frobenius manifold which arises on the deformation space of the corresponding Landau--Ginzburg superpotential. I review  this approach and demonstrate its efficiency by computing 
the Special geometry  of the 101-dimensional moduli space of the quintic threefold around the orbifold point  \cite {AKBA3}. }

\maketitle
\flushbottom

\section{ Introduction}

To compute the low-energy Lagrangian of the string theory compactified on a CY 
 manifold \cite {CHSW}, one needs to know the Special geometry of the corresponding CY moduli space 
\cite{ Distances, S, Rolling, CO}.

More precisely, the effective Lagrangian of the vector multiplets in the superspace contains
 $h^{2,1}$ supermultiplets. Scalars from these multiplets take value in the
 target space $\MM$, which is a moduli space
 of complex structures on a CY manifold and is a special K\"ahler manifold.
 Metric $G_{a\bar{b}}$ and Yukawa couplings $\kappa_{abc}$ 
on this space are given by the following formulae in the special coordinates $z^a$ :
\ba \label{koo}
&G_{a\bar{b}} = \pd_a\pd_{\bar{b}} K, \;\;\;
 e^{-K} = -i \int_X \Omega\wedge\bar{\Omega},\\
&\kappa_{abc} = \int_X \Omega\wedge\pd_a\pd_b\pd_c\Omega = \frac{\pd^3 F}{\pd z^a \pd z^b \pd z^c},
\ea
where
\be
z^a = \int_{A_a} \Omega, \; \frac{\pd F}{\pd z^a} = \int_{B^a} \Omega
\ee
are the period integrals of the holomorphic volume form $\Omega$ on $X$.
Here $A_a$ and $B^a$ form the symplectic basis in $H_3(X,\ZZ)$.

We can rewrite the expression~\eqref{koo} for the K\"ahler potential using the periods as
\be \label{ksympl}
e^{-K} = -i \Pi \Sigma \Pi^{\dagger}, \; \Pi = (\pd F, \; z), 
\ee
where matrix $(\Sigma)^{-1}$ is an intersection matrix of cycles $A_a, \; B^a$ equal to the
symplectic unit.

The computation of periods in the symplectic basis appears to be very non-trivial. It was firstly  performed for the case of the quintic CY manifold in the distinguished paper  \cite{COGP}.\\
Here I present an alternative approach to the computation of  K\"ahler potential for the case
where CY manifold is given by a hypersurface $W(x, \phi) = 0$ in a weighted projective space.
The approach is based on the connection of  CY manifold with a Frobenius ring which arises on the deformations of the singularity defined by the superpotential   $W_0(x)$  \cite{ LVW,  Mart, VW}.
  
Let a CY manifold $X$ be given as a solution of an equation 
\be
W(x, \phi) = W_0(x) + \sum_{s=1}^{h^{2,1}} \phi_s e_s(x)  = 0
\ee
in  some weighted projective space, where $W_0(x)$ is a quasihomogeneous function in $\CC^5$
 of degree $d$ that defines an isolated singularity at $x=0$. 
The monomials $e_s(x) $ also have  degree $d$ and are in a  correspondence to deformations of
 the complex structure of  $X$.

Polynomial $W_0(x)$ defines a Milnor ring $R_0$. Inside  $R_0$  there exists a subring $R^Q_0$  which
 is invariant under the action of the so-called quantum symmetry group $Q$ that
acts on $\CC^5$ diagonally, and preserves $W(x, \phi)$. 
In many cases 
$\dim R_0^Q = \dim H^3(X)$ and the ring itself has a Hodge structure $R_0^Q = (R_0^Q)^0 
\oplus (R_0^Q)^1 \oplus (R_0^Q)^2 \oplus (R_0^Q)^3$ in correspondence with  the elements 
of degrees $0,d,2d,3d$.

Another important group is the subgroup of phase symmetries $G$, which acts diagonally on $\CC^5$, commutes with  the quantum symmetry $Q$  and preserves $W_0(x)$. 
It acts naturally on the invariant ring $R_0^Q$, and this action respects
 the Hodge decomposition of  $R_0^Q$. 
This allows to choose a basis $e_{\mu}(x)$ in each of the Hodge decomposition components
 of $R_0^Q$ to be eigenvectors for the   $G$  group action.

On the  ring  $R_0^Q$ we introduce  the invariant pairing  $\eta$. The pairing turns the ring to a 
Frobenius algebra \cite{Dub}. The pairing   $\eta$  plays an important for  our construction of 
the explicit expression  for the volume of the Calabi-Yau manifold.

Using the invariant ring $R_0^Q$ and differentials $D_{\pm} = \dd \pm \dd W_0\wedge$ we construct
two $Q-$invariant cohomology groups  $H^5_{D_{\pm}}(\CC^5)_{inv}$. 
These groups inherit the Hodge
structure from $R_0^Q$. We can choose in $H^5_{D_{\pm}}(\CC^5)_{inv}$ the  eigenbasises $e_{\mu}(x) \, \dd^5 x$ which are also invariant under the phase symmetry action.

As shown  in~\cite{Candelas}, elements of these cohomology groups are in correspondence with the harmonic
forms of $H^3(X)$. 
This isomorphism allows to define  the antilinear involution  $*$ 
on the invariant cohomology $H^5_{D_{\pm}}(\CC^5)_{inv}$ 
that corresponds to the complex conjugation on the space of the harmonic forms of $H^3(X)$. 

It turns out, that in the basis $e_{\mu}(x)$ it reads
\be \label{conj}
*e_{\mu}(x)\,\dd^5 x =  M_{\mu}^{\nu} e_{\nu} (x) \, \dd^5 x ,
 \, M_{\mu}^{\nu} = \delta_{e_{\mu}\cdot e_{\nu}, e_{\rho}} A^{\mu} 
\ee
where $e_{\rho}(x)$ is the unique  element of degree $3d$ in $R_0^Q$, and
$ \delta_{e_{\mu} \cdot  e_{\nu}, e_{\rho}}$ is 1 if $e_{\mu}\cdot e_{\nu} =  e_{\rho}$
and  $0$ otherwise.

Having $H^5_{D_{\pm}}(\CC^5)_{inv}$ we define the relative invariant homology subgroups
$\mathscr{H}_{5}^{\pm,inv} := H_5(\CC^5, W_0 = L,\;\mathrm{Re}L \to \pm \infty)_{inv}$ inside the relative homology groups
$H_5(\CC^5, W_0 = L,\;\mathrm{Re}L \to \pm \infty)$. 
To do this  we will  use the oscillatory integrals and their pairing 
with elements of $H^5_{D_{\pm}}(\CC^5)_{inv}$  .
 Using  this pairing we define a cycle  $\Gamma^{\pm}_{\mu}$  
in the basis of relative invariant homology to  be  dual to  $e_{\mu}(x) \, \dd^5 x$.

At last we define periods $\sigma^{\pm}_{\mu}(\phi)$ to be oscillatory integrals over the basis 
of cycles $\Gamma^{\pm}_{\mu}$. They are equal to periods of the holomorphic volume form $\Omega$
on $X$ in a special basis of cycles of $H_3(X, \CC)$ with complex coefficients. 

It follows from  the phase symmetry invariance that in the chosen basis of
 cycles $\Gamma^{\pm}_{\mu}$ the formula for K\"ahler potential has the  diagonal form:
\be
e^{-K(\phi)} = \sum_{\mu}  (-1)^{|\nu|}\sigma^+_{\mu}(\phi) A^{\mu} \overline{\sigma^-_{\mu}(\phi)}.
\ee
On the other hand, as
 shown in~\cite{AKBA1}, matrix $A = \mathrm{diag}\{A^{\mu}\}$ is equal to the product of the matrix
 of the invariant pairing $\eta$ 
in the Frobenius algebra $R_0^Q$  and the real structure matrix $M$ such that 
\be
e^{-K(\phi)} = \sum_{\mu,\nu}  \sigma^+_{\mu}(\phi) \eta^{\mu\lambda}
\, M^{\nu}_{\lambda} \overline{\sigma^-_{\mu}(\phi)}.
\ee
 The real structure matrix is nothing but  matrix $M$   from~\eqref{conj}.
Using this we are able to explicitly compute the diagonal matrix elements  $A^{\mu}$ and to obtain the explicit expression  for the whole $e^{-K}$.

  \section{The special geometry on the CY moduli space}\label{sec:special}

It was shown in  in \cite{S, Distances, Rolling, CO}  that the moduli space   $\MM$ of complex (or K\"ahler)  structures of a given CY manifold  is a  special  K\"ahler manifold. \\
Namely on $\MM$  there exist so-called special (projective) coordinates
 $z^1 \cdots z^{n+1}$ and a holomorphic homogeneous function $F(z)$ of degree 2 in $z$, called a 
prepotential, such that  the K\"ahler potential $K(z)$ of the moduli space metric  is
 given by 
\be \label{specmet} e^{-K(z)} = \int_X \Omega\wedge \bar{\Omega} =
  z^a \cdot \frac{\pd \bar{F}}{ \pd \bar{z}^{\bar{a}}} - \bar{z}^{\bar{a}} \cdot \frac{\pd F}{ \pd z^{a}} 
\ee

To obtain this formula, we  choose Poincare dual symplectic basises 
 $ \alpha_a,\beta^b \in H^3(X, \ZZ)$ and $A^a,B_b\in H_3(X, \ZZ)$ and define the periods  as

\be
z^a = \int_{A^a} \Omega, \; F_b = \int_{B_b} \Omega.
\ee
Then using the  Kodaira Lemma
\be
\pd_a \Omega = k_a \Omega + \chi_a,
\ee
we can show that
\be
F_a(z) = \frac{1}{2} \pd_a (F(z)),
\ee
where $F(z)= 1/2 z^b F_b(z)$.\\

Therefore, according to the definition~\eqref{specmet} metric
 $G_{a\bar{b}} = \pd_a\bar{\pd}_{\bar{b}} \, K(z)$ is a special  K\"ahler metric 
with prepotential $F(z)$ and with the special coordinates given by the period vector
\be
\Pi = \begin{pmatrix} F_{\alpha}, z^{b} \end{pmatrix}
\ee
we write the expression for the  K\"ahler potential as 
\be \label{symplkahler}
e^{-K(z)} = \Pi_\mu \Sigma^{\mu \nu} \bar{\Pi}_\nu,
\ee 
where $\Sigma$ is a symplectic unit, which is an inverse intersection matrix for cycles
 $A^a$ and $ B_b$.

Using formula~\eqref{symplkahler}, we can rewrite this expression in a  basis of periods 
defined as integrals over arbitarary basisis of cycles $q_\mu \in H_3(X, \ZZ)$

\be
\omega_{\mu} = \int_{q_\mu} \Omega ~ .
\ee
Such that

\be\label{kahpot2}
e^{-K} = \omega_{\mu} C^{\mu\nu} \bar{\omega}_{\nu},
\ee
where $ C^{\mu\nu}$ is the inverse  marix of the intersection of the cycles  $q_\mu$.

So to find   the K\"ahler potential, we must compute the periods over a basis of cycles
 on CY manifold   and find their intersection matrix.

\section{Hodge structure on the middle cohomology of the quintic}

Now let us specialize to the case where $X$ is a quintic threefold:
\be
X = \{(x_1: \cdots : x_5) \in \PP^4 \; | \; W(x, \phi) = 0 \},
\ee
and
\be \label{W}
W(x, \phi) = W_0(x) + \sum_{t=0}^{100} \phi_t e_t(x), \; W_0(x) = x_1^5+x_2^5+x_3^5+x_4^5+x_5^5  
\ee
and $e_t(x)$ are the degree 5 monomials such that each variable has the power that is a
 non-negative integer  less then four.
Let us denote monomials $e_t(x) = x_1^{t_1} x_2^{t_2}x_3^{t_3} x_4^{t_4}x_5^{t_5}$ by its degree
vector $t = (t_1, \cdots,  t_5)$. Then there are precisely 101 of such monomials, which can be 
divided into $5$ sets in respect to the permutation group $S_5$: 
$(1,1,1,1,1),$ $(2,1,1,1,0),$ $(2,2,1,0,0),$ $(3,1,1,0,0),$ $(3,2,0,0,0)$. 
In these groups there are correspondingly 1, 20, 30, 30, 20 different
monomials.
We denote $e_0(x) := e_{(1,1,1,1,1)}(x) = x_1x_2x_3x_4x_5$ to be the so-called
 fundamental monomial, which will be somewhat distinguished in our picture. \\

For this CY $\dim H_3(X) = 204$ and period integrals have the form
\be
\omega_{\mu}(x) = \int_{q_{\mu}} \frac{x_5 \, \dd x_1\dd x_2\dd x_3}{\pd W(x, \phi)/\pd x_4} = 
\int_{Q_{\mu}} \frac{\dd x_1 \cdots \dd x_5}{W(x, \phi)},
\ee
where $q_{\mu}\in H_3(X,\ZZ)$  and the corresponding cycles 
$Q_{\mu} \in H_5(\CC^5 \backslash (W(x, \phi) = 0), \ZZ)$. 

Cohomology groups of the K\"ahler manifold $X$ possess a Hodge structure $H^3(X) = H^{3,0}(X)\oplus
H^{2,1}(X)
\oplus H^{1,2}(X)\oplus H^{0,3}(X)$.
Period integrals measure variation of the Hodge structure on $H^3(X)$ as the complex structure
 on $X$ varies with $\phi$.

 This Hodge structure variation is in correspondence with a Frobenius ring which we
will now describe.\\

\section{Hodge structure on the invariant Milnor ring}

Now we will consider $W_0(x)$ as an isolated  singularity in $\CC^5$
 and the  associated  with it Milnor ring
\be
R_0 = \frac{\CC[x_1, \cdots, x_5]}{\langle\pd_i W_0 \rangle}.
\ee
We can choose its elements to be unique smallest degree polynomial representatives.
For the quintic threefold $X$ its Milnor ring $R_0$ is generated as a vector space
 by monomials where each variable 
has degree less than four, and $\dim R_0 = 1024$. 

Since the polynomial $W_0(x)$ is homogeneous one of the fifth degree it follows that
$W_0(\alpha  x_1, \ldots, \alpha x_5) = W_0(x_1, \ldots, x_5)$ for $\alpha^5 = 1$. This action preserves $W_0(x)$ and is trivial in the corresponding projective space and on $X$. Such a group with this action is called 
a~\textit{quantum symmetry} $Q$, in our case $Q \simeq \ZZ_5$. 
$Q$ obviously acts on the Milnor ring $R_0$. \\
We define a subring $R_0^Q$  to be a $Q$-invariant part of the Milnor ring
\be
R_0^Q := \{e_{\mu}(x) \in R_0 \; | \; e_{\mu}(\alpha x)  = e_{\mu}(x)\}, \; \alpha^5 = 1.
\ee
$R_0^Q$  is multiplicatively
generated by 101 fifth-degree monomials $e_t(x)$ from~\eqref{W} and 
consists of elements of degree $0,5,10$ and $15$. The dimensions of the corresponding subspaces
are $1,101,101$ and $1$.\\
 This degree filtration defines a Hodge structure on $R^Q_0$. 
Actually,  $R_0^Q$ is isomorphic to $H^3(X)$ and this isomorphism  sends the degree filtration on   $R_0^Q$ to the Hodge filtration on $H^3(X)$ \cite{Candelas}.\\
Let us denote $\chi^i_{\bar{j}} = g^{i\bar{k}} \, \chi_{\bar{k}\bar{j}}$ as an extrinsic curvature tensor  
and $ g_{i\bar{k}}$ is a metric  for the hypersurface $W(x, \phi) = 0$ in $\PP^4$. 
Then the isomorphism above can be written as a map from $R_0^Q$ to closed differential
forms in $H^3(X)$:

\ba \label{chi}
&1 \to \Omega_{ijk} \in H^{3,0}(X), \\
&e_{\mu}(x) \to e_{\mu}(x(y)) \, \chi^l_{\bar{i}} \, \Omega_{ljk} \in H^{2,1}(X) \text{ if } |\mu| = 5, \\
&e_{\mu}(x) \to e_{\mu}(x(y)) \, \chi^l_{\bar{i}} \, \chi^m_{\bar{j}} \, \Omega_{lmk} \in H^{1,2}(X) \text{ if }
 |\mu| = 10, \\
&e_{\rho}(x) = x_1^3x_2^3x_3^3x_4^3x_5^3 \to \chi^l_{\bar{i}} \, \chi^m_{\bar{j}} \chi^p_{\bar{k}} \, \Omega_{lmp} = \kappa \bar{\Omega} \in H^{0,3}(X)
\ea
 The details of this map can be found in~\cite{Candelas}. We also introduce the notation $e_{\mu}(x)$  for elements of the monomial basis of $R^Q_0$,
where  $\mu = (\mu_1, \cdots, \mu_5), \; \mu_i \in \ZZ_+^5 ,\; e_{\mu}(x) = \prod_i x_i^{\mu_i}$ and  the degree of $e_{\mu}(x)$  $\mu = \sum \mu_i$ is  equal to zero module $5$. 
In particular, $\rho = (3,3,3,3,3),$ that is $e_{\rho}(x)$ is the unique degree 15 element of $R_0^Q$.\\
The phase symmetry group  $\ZZ^5_5$  acts diagonally on $\CC^5$:
$\alpha \cdot (x_1, \cdots, x_5) = (\alpha_1 x_1, \cdots, \alpha_5 x_5), \; \alpha_i^5=1$.
 This action preserves
$W_0 = \sum_i x_i^5$. The mentioned above quantum symmetry $Q$ is a diagonal
 subgroup of the phase symmetries. Basis $\{e_{\mu}(x)\}$ consits of the  eigenvectors of the phase symmetry and each $e_{\mu}(x)$ has a unique weight. 
Note that the action of the  phase symmetry preserves  the Hodge decomposition.\\
Another important fact is  that on the invariant ring $R_0^Q$ there exists a natural invariant pairing  turning it into a Frobenius algebra~\cite{Dub}:
\be
\eta_{\mu\nu} = \Res \frac{e_{\mu}(x) \, e_{\nu}(x)}{\prod_i \pd_i W_0(x)}.
\ee
Up to an irrelevant constant for the monomial basis it is $\eta_{\mu\nu} = \delta_{\mu+\nu, \rho}$.
This pairing
plays a crucial role in our construction.\\
Let us  introduce a couple of Saito differentials as in ~\cite{AKBA1} on differential forms on $\CC^5: 
\; D_{\pm} = \dd \pm \dd W_0(x) \wedge$.
They define two cohomology groups $H^*_{D_{\pm}}(\CC^5)$. The cohomologies are only nontrivial
in the top dimension $H^5_{D_{\pm}}(\CC^5) \overset{J}{\simeq} R_0$. The isomorphism $J$ has an explicit description
\be
J(e_{\mu}(x)) = e_{\mu}(x) \, \dd^5 x, \; e_{\mu}(x) \in R_0.
\ee 
We see, that $Q = \ZZ_5$ naturally acts on $H^5_{D_{\pm}}(\CC^5)$ and $J$ sends
 the elements of $Q$-invariant ring $R_0^Q$ to $Q$-invariant subspace $H^5_{D_{\pm}}(\CC^5)_{inv}$. 
Therefore, the latter space obtains the Hodge structure  as well.
Actually, this Hodge structure  naturally corresponds  to the Hodge structure on $H^3(X)$.

 The complex conjugation  acts 
on $H^3(X)$ so that $\overline{H^{p,q}(X)} = H^{q,p}(X)$, in particular
 $\overline{H^{2,1}(X)} = H^{1,2}(X)$. Through the isomorphism  between
$R_0^Q$ and $H^3(X)$ the complex conjugation acts also on the elements of the
ring $R_0^Q\; $ as  $* e_{\mu}(x) = p_{\mu} e_{\rho - \mu}(x),$ where  $p_{\mu} p_{\rho - \mu}=1$ and
$p_{\mu}$ is a constant to be determined. In particular, differential form built from the linear combinations  $e_{\mu}(x) + p_{\mu} e_{\rho - \mu}(x)
\in H^3(X, \RR)$ is real.

\section{Oscillatory representation and computation periods  $\sigma_{\mu}(\phi)$}

Relative homology groups $H_5(\CC^5, W_0 = L,\;\mathrm{Re}L \to \pm \infty)$ have a natural 
pairing with $Q$-invariant cohomology groups $H^5_{D_{\pm}}(\CC^5)_{inv}$ defined as
\be
\langle e_{\mu}(x)\dd^5 x, \Gamma^{\pm} \rangle = \int_{\Gamma^{\pm}}
 e_{\mu}(x) e^{\mp W_0(x)} \dd^5 x, \;
  H_5(\CC^5, W_0 = L,\;\mathrm{Re}L \to \pm \infty).
\ee
Using this we introduce two  $Q$-invariant homology 
groups\footnote{We are grateful to V. Vasiliev for explaining to  us the details 
about these homology groups and their connection with the middle homology of $X$.} 
$\mathscr{H}_5^{\pm,inv}$ as
 quotient of $H_5(\CC^5, W_0 = L,\;\mathrm{Re}L \to \pm \infty)$ with respect to the subgroups orthogonal
to $H^5_{D_{\pm}}(\CC^5)_{inv}$. 
Now we introduce  basises $\Gamma^{\pm}_{\mu}$ in the homology groups  
$\mathscr{H}_5^{\pm,inv}$ using the duality with the basises 
in  $H^5_{D_{\pm}}(\CC^5)_{inv}$:
\be
\int_{\Gamma^{\pm}_{\mu}}
 e_{\nu}(x) e^{\mp W_0(x)} \dd^5 x = \delta_{\mu\nu}
\ee
and the corresponding periods
\ba \label{sig1}
&\sigma_{\alpha\mu}^{\pm}(\phi) := \int_{\Gamma^{\pm}_{\mu}}
 e_{\alpha}(x) e^{\mp W(x, \phi)} \dd^5 x, \\
 &\sigma_{\mu}^{\pm}(\phi) := \sigma_{0\mu}^{\pm}(\phi)
\ea
which are understood as series expansions in $\phi$ around zero.\\
 The periods 
$\sigma_{\mu}^{\pm}(\phi)$ satisfy the same differential equation as periods $\omega_{\mu}(\phi)$
of the holomorphic volume form on $X$. Moreover, these sets of periods span same subspaces as
functions of $\phi$. Therefore we can define cycles $Q^{\pm}_{\mu} \in \mathscr{H}_5^{\pm,inv}$ such that
\be \label{qcyc}
\int_{Q^{\pm}_{\mu}}
  e^{\mp W(x, \phi)} \dd^5 x =  \int_{q_{\mu}}\Omega = \int_{Q_{\mu}}\frac{\dd^5 x}{W(x, \phi)}.
\ee
So the periods $\omega_{\alpha\mu}^{\pm}(\phi)$ are given by the integrals over these cycles analogous to~\eqref{sig1}.

With these notations the idea of computation of periods~\cite{BB}
\be \label{sigma1}
\sigma^{\pm}_{\mu}(\phi) = \int_{\Gamma^{\pm}_{\mu}} 
 e^{\mp W(x, \phi)} \, \dd^5 x
\ee
can be stated as follows. \\
 To explicitly compute $\sigma^{\pm}_{\mu}(\phi)$, first  we expand the exponent in the
 integral~\eqref{sigma1} in $\phi$ representing $W(x,\phi) = W_0(x) + \sum_s \phi_s e_s(x)$
\be \label{sigma2}
\sigma^{\pm}_{\mu}(\phi) = \sum_m  \left(\prod_s\frac{(\pm\phi_s)^{m_s}}{ m_s!}\right) \int_{\Gamma^{\pm}_{\mu}} 
 \prod_s e_{s}(x)^{m_s} \, e^{\mp W_0(x)} \, \dd^5 x.
\ee
We note, that $\sigma^-_{\mu}(\phi) = (-1)^{|\mu|}\sigma^+_{\mu}(\phi),$
so we focus on $\sigma_{\mu}(\phi) := \sigma^{+}_{\mu}(\phi).$\\
For each of the summands in~\eqref{sigma2} the form
 $\prod_s e_{s}(x)^{m_s} \, \dd^5 x$ belongs to $H^5_{D_{\pm}}(\CC^5)_{inv},$
because it is $Q-$invariant.
Therefore, we can expand it in the
 basis $e_{\mu}(x) \, \dd^5 x \in
H^5_{D_{\pm}}(\CC^5)_{inv}.$ Namely we  can find such a polynomial $4-$form $U,$ that
\be \label{sigma3}
 \prod_s e_{s}(x)^{m_s} \, \dd^5 x =
 \sum_{\nu} C_{\nu}(m) \, e_{\nu}(x) \, \dd^5 x + D_{+} U  .
\ee
In result  we obtain for the integral in~\eqref{sigma2} 
\be \label{sigma4}
\int_{\Gamma^{\pm}_{\mu}} 
 \prod_s e_{s}(x)^{m_s} \, e^{\mp W_0(x)} \, \dd^5 x = C_{\mu}(m).
\ee

So from~\eqref{sigma2} we have
\be \label{qsigma1}
\sigma_{\mu}(\phi) = \sum_m \left(\prod_s\frac{\phi_s^{m_s}}{ m_s!}\right) \int_{\Gamma^{+}_{\mu}} 
 \prod_{s,i}  x_i^{ \sum_{s} m_s s_i } \, e^{- W_0(x)} \, \dd^5 x.
\ee
We can  rewrite the sum in the exponent of $x_i$ as $  \sum_{s}m_s s_i = 5 n_i + \nu_i, \; \nu_i < 5$.\\
Therefore we need to compute the coefficients $c^m_{\nu}$ in the equations 
\be
\prod x_i^{5 n_i + \nu_i} \, \dd^5 x =
 \sum_{\nu} c^m_{\nu} \, e_{\nu}(x) \, \dd^5 x + D_{+} U.
\ee
Note that 
\begin{multline} \label{rec}
D_+ \left(\frac{1}{5}x_1^{5n + k-4} \, f(x_2, \cdots, x_5) \, \dd x_2 \wedge \cdots \wedge \dd x_5
\right) = \\ =
\left[x_1^{5n+k} + \left(n+\frac{k-4}{5}\right) x_1^{5(n-1) + k} \right]  \, f(x_2, \cdots, 
x_5) \, \dd^5 x
\end{multline}
Therefore in $D_+$ cohomology we have
\be \label{rec1}
\prod_i x_i^{5 n_i + \nu_i} \, \dd^5 x = 
-\left(n_1+\frac{\nu_1-4}{5}\right) x_1^{5(n_1-1) + \nu_1}
\prod_{i=2}^5 x_i^{5n_i+\nu_i} \, \dd^5 x, \; \nu_i < 5.
\ee
By induction we obtain
\be \label{rec1}
\prod_i x_i^{5 n_i + \nu_i} \, \dd^5 x = 
(-1)^{\sum_i n_i}\prod_i \left(\frac{\nu_i+1}{5}\right)_{n_i}
\prod_i x_i^{\nu_i} \, \dd^5 x, \; \nu_i < 5.
\ee
where $(a)_{n}=\Gamma(a+n)/\Gamma(a)$.\\
Using~\eqref{rec} once again, we see that if any $\nu_i = 4$ then the differential form is trivial and
 the integral is zero. Hence, rhs of~\eqref{rec1} is proportional to $e_{\nu}(x)$ and
 gives the desired 
expression. Plugging~\eqref{rec1} into~\eqref{qsigma1} and integrating over $\Gamma^+_{\mu}$ we obtain the answer
\be
\sigma_{\mu}(\phi)=\sigma_{\mu}^+(\phi) =
 \sum_{n_i\ge 0} \prod_i \left(\frac{\mu_i+1}{5}\right)_{n_i} 
\sum_{m \in \Sigma_n } \prod_s\frac{\phi_s^{m_s}}{ m_s!} ,
\ee
where 
\be
\Sigma_n = \{m\;|\;\sum_s m_s s_i = 5n_i+\mu_i\}
\ee
Further we will also use the periods with slightly different normalization, which turn
 out to be convenient
\be \label{sigmaquint}
\hat{\sigma}_{\mu}(\phi) = \prod_i \Gamma\left(\frac{\mu_i+1}{5}\right)\sigma_{\mu}(\phi)=
 \sum_{n_i\ge 0} \prod_i \Gamma\left(n_i+\frac{\mu_i+1}{5}\right) 
\sum_{m \in \Sigma_n } \prod_s\frac{\phi_s^{m_s}}{ m_s!}. 
\ee

\section{Computation of the K\"ahler potential}
 Pick any basis $Q^{\pm}_{\mu}$ of cycles with integer or real coefficients as in~\eqref{qcyc}.
Then for the K\"ahler potential we have the formula
\be \label{eqK1}
e^{-K} = \omega^+_{\mu}(\phi) C^{\mu\nu} \overline{\omega^-_{\nu}(\phi)}
\ee
in which the matrix $C^{\mu\nu}$ is related with the  Frobenius pairing $\eta$ as
\be \label{eqeta}
\eta_{\alpha\beta} = \omega^+_{\alpha\mu}(0)C^{\mu\nu} \omega^-_{\beta\nu}(0).
\ee
The derivation of the last relation is given in~\cite{CV, Chiodo}.

Let also $T^{\pm}$ be the  matrix that connects   the cycles
 $Q^{\pm}_{\mu}$ and $ \Gamma^{\pm}_{\nu}$.\\
 That is   
$$Q^{\pm}_{\mu} = (T^{\pm})^{\nu}_{\mu} \Gamma^{\pm}_{\nu}$$. 
Then  $M = (T^-)^{-1}\overline{T^-}$ is a real structure matrix, that is $M\bar{M} = 1$ and by construction $M$ doesn't depend on the choice
of basis $Q^{\pm}_{\mu}.$ $M$ is only defined by our choice of $\Gamma^{\pm}_{\mu}$.

In~\cite{AKBA1} we deduced from~\eqref{eqK1} and~\eqref{eqeta}  the formula
\be \label{eqK}
e^{-K(\phi)} = \sigma^+_{\mu}(\phi) \eta^{\mu\lambda} M^{\nu}_{\lambda} \overline{\sigma^-_{\nu}(\phi)} = \sigma_{\mu} A^{\mu\nu}
\overline{\sigma_{\nu}},
\ee
where $\eta^{\mu\nu} = \eta_{\mu\nu} = \delta_{\mu, \rho-\nu}$.

Now we show that the matrix $A^{\mu\nu}$ in~\eqref{eqK} is diagonal. 
 To do this we extend the action of the phase symmetry group to the action $\aA$ on the
 parameter space $\{ \phi_s \}$ such
that $W = W_0 + \sum_s \phi_s e_s(x)$ is invariant under this new action. It easy to see that each $e_s(x)$ has an unique weight under this group action. 
Action $\aA$ can be compensated using the coordinate tranformation and therefore
 is trivial on the moduli space of the quintic (implying that point $W=W_0$ is an
 orbifold point of the moduli space).

In particular, $e^{-K} = \int_X \Omega\wedge\bar{\Omega}$ is $\aA$  invariant. Consider
\be
e^{-K} = \sigma_{\mu} A^{\mu\nu} \overline{\sigma_{\nu}}
\ee
as a series in $\phi_s$ and $ \; \overline{\phi_t}$. 
Each monomial has a certain weight under $\aA$ . 
For the series to be invariant, each monomial must have weight 0. 
But weight of $\sigma_{\mu} \overline{\sigma_{\nu}}$ equals
to $\mu - \nu$ and due to non-degeneracy of weights of $\sigma_{\mu}$
 only the ones with $\mu = \nu$ have weight zero.
Thus,~\eqref{eqK} becomes
\be
e^{-K} = \sum_{\mu} A^{\mu} |\sigma_{\mu}(\phi)|^2.
\ee

Moreover, the matrix $A$ should be real and, because $A = \eta \cdot M, \; M\bar{M}=1$ and $\eta_{\mu\nu} = \delta_{\mu+\nu,\rho},$ we have
 \be \label{mon}
A^{\mu} \, A^{\rho - \mu} = 1.
\ee

\paragraph{Monodromy considerations}
To fix finally the real numbers $A^{\mu}$ we use monodromy invariance of $e^{-K}$ around 
$\phi_0=\infty.$  Pick some $t = (t_1,t_2,t_3,t_4,t_5)$ with $ \; |t|=5$ and let $\phi_s|_{s\ne t,0} = 0,$ . We will  consider only the first order in $\phi_t$.\\
 Then the condition that period $\sigma_{\mu}(\phi)$ contains only non-zero summands
 of the form $\phi_0^{m_0} \, \phi_t$ implies that $\mu = t + const\cdot (1,1,1,1,1)$ mod 5.
 For each $t$
from the table below the only such possibilities are $\mu = t$ and $\mu = \rho - t' = (3,3,3,3,3) - t',$
 where $t'$ denotes a vector obtained from $t$ by permutation 
(written explicitly in the table below) of its coordinates.

Therefore, in this setting~\eqref{eqK} becomes
\be
e^{-K} = \sum_{k=0}^3 a_k |\hat{\sigma}_{(k,k,k,k,k)}|^2 + a_t|\hat{\sigma}_t|^2 + a_{\rho-t'} |\hat{\sigma}_{\rho-t'}|^2 + O(\phi_t^2),
\ee
here we use periods $\hat{\sigma}$ from~\eqref{sigmaquint} and denote  $a_t = A^t /\prod_i \Gamma((t_i+1)/5)^2$. And the coefficients   $a_k, \; k=0,1,2,3$ are already known from \cite {COGP}.
 This expression has to be monodromy invariant under
the transport of $\phi_0$ around $\infty$. From the formula~\eqref{sigmaquint} we have
\ba
&F_1 = \hat{\sigma}_{k}(\phi_t, \phi_0) = g_t \phi_k \, F(a,b;a+b \, |\, (\phi_0/5)^5) + O(\phi_t^6), \\
&F_2 = \hat{\sigma}_{\rho-t'}(\phi_t, \phi_0) = g_{\rho-t'} \phi_t \, \phi_0^{1-a-b} \, F(1-a,1-b;2-a-b\,|\, (\phi_0/5)^5) + O(\phi_t^6), 
\ea
where $g_t, \; g_{\rho-t'}$ are some constants. Explicitly for all different labels $t$\\
\begin{center}
\begin{tabular}{| l | l | l |} \label{tab1}
  $t$ & $\rho-t'$ & (a, b)  \\ \hline
  (2,1,1,1,0) & (3,2,2,2,1) & (2/5,2/5)  \\
  (2,2,1,0,0) & (3,3,2,1,1) & (1/5,3/5)  \\
  (3,1,1,0,0) & (0,3,3,2,2) & (1/5,2/5)  \\
  (3,2,0,0,0) & (1,0,3,3,3) & (1/5,1/5)  \\
\end{tabular}
\end{center}
%\be
%F(a,b;c|z):=\frac{\Gamma(a)\Gamma(b)}{\Gamma(c)} ~_2F_1(a,b;c; z).
%\ee
When $\phi_0$ goes around infinity
\be
\begin{pmatrix}
F_1 \\
F_2
\end{pmatrix} = 
B \cdot\begin{pmatrix}
F_1 \\
F_2
\end{pmatrix},
\ee
where 
\be
B = \frac{1}{i s(a+b) }\begin{pmatrix}
c(a-b) - e^{i\pi (a+b)} & 2 s(a)s(b) \\
2 e^{2\pi i (a+b)} s(a)s(b) & e^{\pi i(a+b)} [e^{2\pi i a} + e^{2\pi i b}-2]/2
\end{pmatrix}.
\ee
Here $c(x) = \cos(\pi x), \; s(x) = \sin(\pi x)$. It is straightforward to show the following 
\begin{prop}
\be
a_t |\hat{\sigma}_t|^2 + a_{\rho-t'} |\hat{\sigma}_{\rho-t'}|^2 = 
a_t \prod_i \Gamma\left(\frac{t_{i}+1}{5}\right)^2 |\sigma_t|^2 +
 a_{\rho-t'}\prod_i \Gamma\left(\frac{4-t_{i}}{5}\right)^2 |\sigma_{\rho-t'}|^2 
\ee
is $B$-invariant iff $a_t = -a_{\rho-t'}$. 
\end{prop}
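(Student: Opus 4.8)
The plan is to read the claimed identity as a statement about a single Hermitian $2\times2$ form and to reduce $B$-invariance to one matrix equation that can be checked directly. First I would dispose of the displayed equality itself: it is nothing but the rescaling $\hat{\sigma}_{\mu}=\prod_i\Gamma((\mu_i+1)/5)\,\sigma_{\mu}$ of~\eqref{sigmaquint} applied to $\mu=t$ and to $\mu=\rho-t'$, where $(\rho-t')_i+1=4-t'_i$ and $\prod_i\Gamma((4-t'_i)/5)=\prod_i\Gamma((4-t_i)/5)$ because $t'$ is a permutation of $t$. So the real content is the $B$-invariance criterion, which I would phrase intrinsically.

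Set $F=(F_1,F_2)^{\mathsf T}=(\hat{\sigma}_t,\hat{\sigma}_{\rho-t'})^{\mathsf T}$ and $H=\mathrm{diag}(a_t,a_{\rho-t'})$, so the form in question is $Q=\bar{F}^{\mathsf T}H F$. The transport of $\phi_0$ around $\infty$ sends the holomorphic solutions $F\mapsto BF$ and hence their conjugates $\bar{F}\mapsto\bar{B}\,\bar{F}$, so $Q\mapsto\bar{F}^{\mathsf T}\bar{B}^{\mathsf T}HBF=\bar{F}^{\mathsf T}(B^{\dagger}HB)F$. Because $F_1$ and $F_2$ are two linearly independent solutions of the same hypergeometric equation, invariance of $Q$ is \emph{equivalent} to the matrix identity $B^{\dagger}HB=H$. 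Writing $B=(i\,s(a+b))^{-1}\tilde{B}$ with $s(a+b)$ real reduces this to $\tilde{B}^{\dagger}H\tilde{B}=s(a+b)^2\,H$.

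The computation then rests on two coincidences of moduli. Using $e^{2\pi i a}+e^{2\pi i b}=2e^{i\pi(a+b)}c(a-b)$ I would simplify the lower-right entry to $\tilde{B}_{22}=\omega^2 c(a-b)-\omega$ with $\omega=e^{i\pi(a+b)}$, while $\tilde{B}_{11}=c(a-b)-\omega$; then $|\tilde{B}_{11}|^2=|\tilde{B}_{22}|^2=c(a-b)^2-2c(a-b)c(a+b)+1$. Since $\tilde{B}_{12}=2s(a)s(b)=c(a-b)-c(a+b)$ is real and $\tilde{B}_{21}=\omega^2\tilde{B}_{12}$, also $|\tilde{B}_{12}|^2=|\tilde{B}_{21}|^2=(c(a-b)-c(a+b))^2=4\,s(a)^2s(b)^2$. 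Feeding these into the two diagonal entries of $\tilde{B}^{\dagger}H\tilde{B}-s(a+b)^2H$ and invoking the elementary identity $c(a-b)^2-2c(a-b)c(a+b)+1-s(a+b)^2=(c(a-b)-c(a+b))^2$, each diagonal condition collapses to $(a_t+a_{\rho-t'})(c(a-b)-c(a+b))^2=0$. For the off-diagonal entry, the relation $\bar{\omega}^2\omega=\bar{\omega}$ gives $\overline{\tilde{B}_{21}}\tilde{B}_{22}=(c(a-b)-c(a+b))(c(a-b)-\bar{\omega})$, so it collapses to $(a_t+a_{\rho-t'})(c(a-b)-\bar{\omega})(c(a-b)-c(a+b))=0$.

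All three conditions therefore carry the common factor $a_t+a_{\rho-t'}$ multiplied by a quantity proportional to $s(a)^2s(b)^2$, which is nonzero for every pair $(a,b)$ in the table (none has $a$ or $b$ integral). Hence $B^{\dagger}HB=H$, and with it $B$-invariance of $Q$, holds if and only if $a_t=-a_{\rho-t'}$, which proves both directions at once. I expect the only genuine subtlety to be bookkeeping: getting the conjugate transformation $\bar{F}\mapsto\bar{B}\,\bar{F}$ right so that the invariance condition is $B^{\dagger}HB=H$ and not $B^{\mathsf T}HB=H$, and then spotting the two modulus coincidences together with the collapsing trigonometric identity — this is precisely what turns an a priori three-equation system into the single scalar relation $a_t+a_{\rho-t'}=0$.
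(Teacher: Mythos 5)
Your proof is correct and takes the route the paper intends but leaves entirely implicit: after displaying the monodromy matrix $B$, the paper simply asserts the Proposition as ``straightforward to show'' and writes no proof. Your reduction of invariance to the single matrix identity $B^{\dagger}HB=H$ (justified by linear independence of the products $\bar{F}_i F_j$), followed by the trigonometric collapse of all three entry conditions to $(a_t+a_{\rho-t'})\cdot s(a)^2 s(b)^2=0$ with $s(a)s(b)\neq 0$ for every pair $(a,b)$ in the table, correctly establishes both directions.
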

Due to symmetry we have $a_{\rho-t'} = a_{\rho-t}$ in each case. From~\eqref{mon} 
it follows that the product of the coefficients at $|\sigma_{\mu}|^2$ and
 $|\sigma_{\rho-\mu}|^2$ in the expression for $e^{-K}$ should be 1:
\be
A^{\rho-t'} \cdot A^t=a_{\rho-t'} \cdot a_t \prod_i \Gamma\left(\frac{t_{i}+1}{5}\right)^2 
\Gamma\left(\frac{4-t_{i}}{5}\right)^2= 1.
\ee
Due to reflection formula $a_t = \pm\prod_i\sin(\pi (t_i+1)/5)$ up to a common factor of $\pi$.
The sign turns out to be minus for K\"ahler metric to be positive definite in the origin.
Therefore 
\be
A^{\mu} = (-1)^{\deg (\mu)/5} \prod{\gamma\left(\frac{\mu_i+1}{5}\right)}.
\ee

 Finally the K\"ahler potential becomes 
\be \label{eqKq}
e^{-K(\phi)} = \sum_{\mu=0}^{203}(-1)^{\deg (\mu)/5} \prod{\gamma\left(\frac{\mu_i+1}{5}\right)}
 |\sigma_{\mu}(\phi)|^2, 
\ee
where $ \gamma(x) = \frac{\Gamma(x)}{\Gamma(1-x)}$.

\section{Real structure on the cycles $\Gamma^{\pm}_{\mu}$}

Let cycles $\gamma_{\mu} \in H_3(X)$ be the images of cycles $\Gamma^+_{\mu}$ under the isomorphism
$\mathscr{H}_5^{+,inv} \simeq H_3(X)$. \\
Complex conjugation sends $(2,1)$-forms to $(1,2)$-forms. Similarly it
extends to a mapping on the dual homology cycles $\gamma_{\mu}$.

\begin{lemma} Conjugation of homology classes has the
following form: $ *\gamma_{\mu} = p_{\mu}\gamma_{\rho - \mu},$
where  $\rho = (3,3,3,3,3)$ is a unique maximal degree element in the Milnor ring.\\
\end{lemma}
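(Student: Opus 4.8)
The plan is to obtain this statement as the antilinear dual of the cohomological conjugation formula~\eqref{conj}, which we already have in hand. First I would record that, by the very definition of the cycles $\Gamma^+_\mu$, the family $\{\gamma_\mu\}$ is the dual basis of $\{e_\nu(x)\,\dd^5x\}$: under the isomorphisms $H^5_{D_+}(\CC^5)_{inv}\simeq H^3(X)$ and $\mathscr{H}_5^{+,inv}\simeq H_3(X)$ the oscillatory pairing $\langle e_\nu(x)\,\dd^5x,\Gamma\rangle=\int_\Gamma e_\nu(x)\,e^{-W_0(x)}\,\dd^5x$ goes over to the period pairing, and the normalization $\langle e_\nu(x)\,\dd^5x,\Gamma^+_\mu\rangle=\delta_{\mu\nu}$ says exactly that $\Gamma^+_\mu$, hence $\gamma_\mu$, is dual to $e_\nu(x)\,\dd^5x$. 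So the whole problem is to transport the antilinear involution $*$ across a perfect pairing.

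The second ingredient is the compatibility of $*$ with this pairing. Since on $H_3(X)$ and $H^3(X)$ the operation $*$ is honest complex conjugation --- antilinear and fixing the real classes --- integration transforms by conjugation, which after transport to the oscillatory side reads
\be
\langle *(e_\nu(x)\,\dd^5x),\;*\Gamma\rangle=\overline{\langle e_\nu(x)\,\dd^5x,\;\Gamma\rangle}
\ee
for every invariant cycle $\Gamma$. Equivalently, the involution on $\mathscr{H}_5^{+,inv}$ is the adjoint, up to complex conjugation of the value, of the cohomological conjugation $M$ of~\eqref{conj}; this is the structural fact that encodes that the homological $*$ is the real structure dual to the cohomological one.

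Granting these two inputs, the rest is linear algebra. Writing $*\gamma_\mu=\sum_\lambda a_{\mu\lambda}\gamma_\lambda$ and pairing against $e_\nu(x)\,\dd^5x$ extracts $a_{\mu\nu}$. Using that~\eqref{conj} gives $*(e_\alpha(x)\,\dd^5x)=A^\alpha e_{\rho-\alpha}(x)\,\dd^5x$ together with the reality of $A$ and the relation $A^\mu A^{\rho-\mu}=1$ of~\eqref{mon} --- which make $*$ an involution on cohomology --- I would put $e_\nu(x)\,\dd^5x=*\big(A^\nu e_{\rho-\nu}(x)\,\dd^5x\big)$ and apply the compatibility relation. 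The pairing then collapses to $a_{\mu\nu}=\overline{A^\nu}\,\delta_{\mu,\rho-\nu}$, so the only surviving coefficient sits at $\lambda=\rho-\mu$ and equals $A^{\rho-\mu}$. Setting $p_\mu:=A^{\rho-\mu}$ yields $*\gamma_\mu=p_\mu\,\gamma_{\rho-\mu}$, and $p_\mu p_{\rho-\mu}=A^{\rho-\mu}A^\mu=1$ by~\eqref{mon}, so that $**=\mathrm{id}$ on homology, exactly as claimed.

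The main obstacle is the compatibility statement taken as input, namely that the abstractly defined involution on $\mathscr{H}_5^{+,inv}$ really is honest complex conjugation of $3$-cycles on $X$ and is adjoint to the cohomological $M$ with a conjugated value. This is not a formality: it rests on the harmonic-form identification of~\cite{Candelas}, which intertwines the two isomorphisms with the Hodge data, and on the elementary but essential fact that complex conjugation fixes real (co)homology so that integration conjugates. Once this identification is secured, the remaining work --- tracking antilinearity, checking involutivity through~\eqref{mon}, and reading off the constant --- is routine, and both the proportionality to $\gamma_{\rho-\mu}$ and the reciprocity $p_\mu p_{\rho-\mu}=1$ drop out automatically.
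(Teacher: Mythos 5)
Your argument is circular relative to this paper's own logic. The input you claim to have ``in hand'' --- formula~\eqref{conj}, i.e.\ that the antilinear involution on $H^5_{D_{\pm}}(\CC^5)_{inv}$ sends the class of $e_{\mu}(x)\,\dd^5 x$ to a multiple of the class of $e_{\rho-\mu}(x)\,\dd^5 x$ --- is not an independently established fact here: it appears only in the Introduction as a summary of results (``It turns out\dots''), and its justification within the paper is precisely this Lemma. The Lemma (stated for homology classes) and~\eqref{conj} (stated for cohomology classes) are the same assertion, translated across the very duality pairing you invoke; indeed the paper's own proof begins ``We perform a proof for the cohomology classes represented by differential forms.'' So your derivation carries the statement from one side of a perfect pairing to the other --- the routine part --- while assuming the side that contains all the content. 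Nothing in your proposal explains \emph{why} conjugation couples the label $\mu$ to the single label $\rho-\mu$ rather than spreading over several basis elements; that is the whole point of the Lemma.

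The missing idea is the phase-symmetry weight argument. The paper works with the harmonic representatives of~\eqref{chi}: for $|t|=5$ the class of $e_t$ corresponds to $\Omega_{2,1}=e_t(x)\,\chi^l_{\bar i}\,\Omega_{ljk}\in H^{2,1}(X)$, and its complex conjugate lies in $H^{1,2}(X)$, hence is represented by \emph{some} degree-$10$ polynomial $P(x)$. The phase symmetry group acts by isomorphisms on $X$ and therefore on forms; since conjugation is antilinear, the left-hand side has weight $-t$ modulo $(1,1,1,1,1)$. Because the monomials $e_{\nu}$ have pairwise distinct weights and $\rho-t\equiv -t \bmod (1,1,1,1,1)$, the only possibility is $P(x)=p_t\,e_{\rho-t}(x)$. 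That one equivariance observation is what makes the paper's argument a proof rather than a reformulation. Your linear-algebra step (including the bookkeeping $p_{\mu}p_{\rho-\mu}=1$ via~\eqref{mon}) would be a legitimate way to pass from the cohomological statement to the homological one \emph{after} the cohomological statement is proven, but as written it assumes what is to be shown. You do flag a gap at the end, but you locate it in the compatibility of $*$ with the pairing; the genuine gap is that~\eqref{conj} itself has no prior proof in the paper, and your proposal supplies none.
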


\begin{proof} We perform a proof for the cohomology classes represented by differential forms.
For one-dimensional $H^{3,0}(X)$ and $H^{0,3}(X)$ it is obvious. Let 
\be
\Omega_{2,1} := e_{t}(x)  \, \chi^l_{\bar{i}} \, \Omega_{ljk} \in H^{2,1}(X).
\ee
Any element from $H^{1,2}(X)$ is representable by a degree 10 polynomial  $P(x)$ as follows from~\eqref{chi}  as
\be
\overline{\Omega_{2,1}} = \Omega_{1,2}:=P(x)  \, \chi^l_{\bar{i}} \ 
 \chi^m_{\bar{j}} \, \Omega_{lmk} \in H^{1,2}(X).
\ee
The group of phase symmetries modulo common factor acts by isomorphisms on $X$. Therefore,
it also acts on the differential forms. Lhs and rhs of the previous equation should have the same
weigth under this action,
and weight of the lhs is equal $-t$ modulo $(1,1,1,1,1)$.
It follows that $P(x) = p_{t} \, e_{\rho - t}(x)$ with some constant $p_t$.
\end{proof}

Using this lemma and applying the complex conjugation of cycles to the formula~\eqref{eqK}
to obtain
\be
e^{-K} = \sum_{\mu} A^{\mu} |\sigma_{\mu}|^2  = 
\sum_{\mu} p_{\mu}^2 A^{\mu} \;|\sigma_{\rho - \mu}|^2,
\ee
it follows that $A^{\mu} = \pm 1/p_{\mu}.$  Now formula~\eqref{eqKq} implies
\be\label{coeff}
p_{\mu} = \prod{\gamma\left(\frac{4-\mu_i}{5}\right)}.
\ee

\section{Conclusion}

I  am  grateful to K.~ Aleshkin for the interesting collaboration; the talk is based on the  joint work  with K.~ Aleshkin. Also I am thankful  to M. Bershtein, V. Belavin, S.Galkin,  D. Gepner, A. Givental, M. Kontsevich,  A. Okounkov, A. Rosly, V. Vasiliev for the  useful discussions. 
The work  has been performed for FASO budget project No. 0033-2018-0006. 
\bigskip

\end{document}